\pgfplotsset{compat=1.15} % バージョン指定
\theoremstyle{thmstyleone}%
\newtheorem{theorem}{Theorem}[section]% meant for sectionwise numbers
\theoremstyle{thmstyletwo}%
\theoremstyle{thmstylethree}%
\newtheorem{definition}{Definition}[section]%
\newtheorem{lemma}[theorem]{Lemma}%
\numberwithin{equation}{section}
\begin{document}

\title[Article Title]{Mathematical epidemiology of infectious diseases: an ongoing challenge}

\author*[1]{\fnm{Odo} \sur{Diekmann}}\email{O.Diekmann@uu.nl}

\author[2]{\fnm{Hisashi} \sur{Inaba}}\email{inaba57@u-gakugei.ac.jp}
%\equalcont{These authors contributed equally to this work.}

\author[3]{\fnm{Horst R.} \sur{Thieme}}\email{hthieme@asu.edu}
%\equalcont{These authors contributed equally to this work.}
\affil*[1]{\orgdiv{Mathematical Institute}, \orgname{Utrecht University}, \orgaddress{\street{P.O.Box 80.010}, \city{Utrecht}, \postcode{NL3508TA}, \country{The Netherlands}}}

%\affil*[1]{\orgdiv{Department}, \orgname{Organization}, \orgaddress{\street{Street}, \city{City}, \postcode{100190}, \state{State}, \country{Country}}}
\affil[2]{\orgdiv{Department of Education}, \orgname{Tokyo Gakugei University}, \orgaddress{\street{4-1-1 Nukuikita-machi}, \city{Koganei-shi}, \postcode{184-8501}, \state{Tokyo}, \country{Japan}}}

\affil[3]{\orgdiv{School of Mathematical and Statistical Sciences}, \orgname{Arizona State University}, \orgaddress{\city{Tempe}, \postcode{AZ 85287-1804}, \country{USA}}}

%\affil[3]{\orgdiv{Department}, \orgname{Organization}, \orgaddress{\street{Street}, \city{City}, \postcode{610101}, \state{State}, \country{Country}}}

%%==================================%%
%% Sample for unstructured abstract %%
%%==================================%%

\abstract{The aim of this paper is to present an, admittedly somewhat subjective, bird's eye view of the mathematical theory concerning the spread of an infectious disease in a susceptible host population with static structure, culminating in a future-oriented description of various modelling challenges.}

\keywords{Kermack-McKendrick model, heterogeneity, compartment model, basic reproduction number, eigenfunctionals}

%%\pacs[JEL Classification]{D8, H51}

\pacs[MSC Classification]{92C60, 92D25, 92D30}

\maketitle

Dedicated to the memory of Professor Masaya Yamaguti (1925-1998), who brought a long time ago two of us together and inspired us then and now to view mathematical abstraction as beneficial for scientific comprehension.

\section{The early days}\label{sec1}

Professor Masaya Yamaguti was a man of many talents and wide interests in mathematics, culture and art.
So was Sir Ronald Ross  (1857-1932) who, in 1902, received the Nobel Prize in Physiology or Medicine for his discovery, by way of experimentation, that the parasite causing malaria was transmitted by mosquitoes.  Ross also wrote poems and novels, he composed songs and he was a keen amateur mathematician. This helped him to realise that his discovery opened the door to combat malaria by reducing the prevalence of mosquitoes. In order to quantitatively elaborate this idea, Ross built mathematical models and analysed them\footnote{\url{https://faculty.washington.edu/smitdave/theory/}} \cite{SmBa}, sometimes together with Hilda Hudson, a mathematician at Cambridge University\footnote{\url{https://faculty.washington.edu/smitdave/theory/math_bibliography.html}} \cite{RoHu1, RoHu2} and the first invited female speaker at an International Congress of Mathematicians (ICM) \cite{MiRo}.

Even before the WHO was founded in 1948, this work triggered the Rockefeller Foundation to try out whether elimination of malaria can be achieved in practice. They chose to focus on the Italian island of Sardinia, which was relatively isolated from the rest of the world and not too large. The project ran from 1946 to 1951 and was ultimately successful, largely due to the massive use of the insecticide DDT. When subsequently trying to apply the same methodology in other parts of the world, problems arose. On a longer time scale, mosquitoes developed resistance to DDT. Even more importantly, it was found that the use of DDT has many adverse environmental effects. So in the 1970s the use of DDT was gradually banned all over the world. Today, malaria is still a most severe public health problem in large parts of the world
\cite{Diop2025, Feng2004, Hoppensteadt2011, QuPo, WHO}\cite[Sec.4.1]{Martcheva2015}, as it has been since ancient times \cite{McNeill1976, Michel2024}.

     In 1901, Ross went on a mission to Sierra Leone, to improve the local anti-malarial measures. Among the other people taking part in the mission was the 25-year-old Anderson Gray McKendrick, who had been trained in military and tropical medicine and who, as it has turned out, had an even greater talent for mathematics than Ross himself\footnote{\url{https://mathshistory.st-andrews.ac.uk/Biographies/McKendrick/} }. In 1927 the paper “A contribution to the mathematical theory of epidemics”, written by W.O.Kermack and A.G. McKendrick, was published in the Proceedings of the Royal Society of London, Series A, Containing papers of a mathematical and physical character \cite{Kermack1927}. In our opinion, this remarkable (and remarkably early) paper is THE highlight of the field! Much to our chagrin, it is often misquoted in the sense that it is suggested that this paper introduced the SIR compartmental model and nothing more \cite{Diekmann2020}. In fact, it introduced a rather general model, which mathematically takes the form of a nonlinear renewal equation!

The organization of the paper is as follows. In Sections 2 and 3 we present an informal account of known results concerning various aspects of the Kermack-McKendrick model, together with relevant references. Section 2 concerns the original `homogeneous' model, while Section 3 is devoted to the variant that includes general static heterogeneity. In both cases we explain how compartmental models are included, how one should look at the model from a dynamical systems point of view, how discrete time (and discrete trait) versions can be used for numerical computations, how the initial phase can be characterized in terms of the Basic Reproduction Number $R_0$ and the Malthusian parameter $r$, the notion of Herd Immunity Threshold, and the final size equation and its probabilistic interpretation. Section 4 is devoted to a precise mathematical formulation of a new threshold result. By focussing on the final size, rather than the behaviour in the initial phase, this result can be proven while assuming very little concerning the (regularity of the) model ingredients \cite{Thieme2024b}. In the final Section 5 we describe why ‘real life’ epidemic outbreaks should motivate us to drop the assumption that the trait describing the heterogeneity is static. As allowing the trait to be dynamic ‘destroys’ the monotonicity inherent in the static model, new methods are needed. To emphasize this, we use the word ‘challenges’ in the title of Section 5

This survey paper is very much guided by our own research interests. For the convenience of the reader, we now provide a (non-exhaustive) list of books about epidemic dynamics: \cite{Allen2015, Anderson1991, Bailey1975, Bjornstad2023, Brauer2019, Brauer2008, Britton2019, Busenberg1993,Capasso1993, Chowell2009, Chowell2016, Daley1999, Dieckmann2002, Diekmann2013, Ducrot2022, Feng2014, Hadeler2017, Hernandez2023, Hoppensteadt2011, Iannelli1995, Iannelli2017, Inaba2017, Keeling2008, Kiss2017, Lauwerier1984, Li2018, Li2020, Ma2009, Ma2009b, Magal2008, Malchow2008, Manfredi2013, Martcheva2015, Murty2022, Rass2003, Sattenspiel2009, Smith2011, Takeuchi2007, Thieme2003, Von2023, Vyn2010, Yan2019, Zhao2017}. 
In these books one finds stimulating discussions of a variety of aspects of the rather dynamic field of mathematical epidemiology.

\section{The 1927 Kermack-McKendrick model}

In this section, we provide a motivated description of the 1927 model and its key properties.

\subsection{Model formulation}

The probability per unit of time with which a susceptible individual gets infected is called the {\it force-of-infection} (in analogy with the force-of-mortality, i.e., the instantaneous per capita death rate, in demography; both are examples of the general notion of hazard rate). A key assumption of the 1927 model is that all individuals experience the same force-of-infection (in the next section we shall lift this restriction). The number of new cases (i.e., susceptible individuals who become infected) per unit of time is called the {\it incidence}.

Let $S$ denote the size of the subpopulation of susceptible individuals, and let $F$ denote the force-of-infection. In a closed population, the incidence is equal to both $F S$ and to $- dS/dt$, so we have
\begin{equation}\label{2.1}
\frac{dS}{dt} = - F S.
\end{equation}

The essence of the Kermack and McKendrick model is the constitutive equation that expresses $F$ in terms of the contributions of individuals who became infected before the current time:
\begin{equation}\label{2.2}
 F(t) = \int_{0}^{\infty}A(\tau) F(t-\tau)S(t-\tau) d\tau.
\end{equation}

Here the one-and-only (apart from $N$, the total host population size) model ingredient $A$ describes the expected contribution to the force-of-infection as a function of the time $\tau$ elapsed since infection took place. In this top-down approach we postpone a specification of the stochastic processes that underlie the word {\it expected} (in general, these concern both within host processes, in particular the struggle between the pathogen and the immune system, and the between host contact process). Indeed, Kermack and McKendrick wanted to know what general conclusions could be drawn {\it without} providing such a specification.

We define the susceptible fraction $s(t)$ by
\begin{equation}
s(t)=\frac{S(t)}{N,}
\end{equation}
and the cumulative force-of-infection $w(t)$ by
\begin{equation}
w(t)=\int_{-\infty}^{t}F(\sigma)d\sigma.
\end{equation}
Integrating \eqref{2.1} with respect to time over $(-\infty,t]$ gives the identity
\begin{equation}\label{st}
s(t)=e^{-w(t)}.
\end{equation} 
Next replace $FS$ by $-S'$ in \eqref{2.2} and integrate, then changing the order of the integrals leads to the Renewal Equation (RE)
\begin{equation}\label{RE}
  w(t) = \int_{0}^{\infty} A(\tau) \Psi(w(t - \tau)) d\tau,
  \end{equation}
where 
\begin{equation}\label{psi}
  \Psi(w) := N ( 1 - e^{-w}),
\end{equation}
which corresponds to the subpopulation of no longer susceptible individuals, given the cumulative force of infection.

\subsection{Compartmental models are included}

Assume that a positive integer \( n \), two positive \( n \)-column vectors \( U \) and \( V \) and a positive-off-diagonal $n \times n$ matrix \( D \) are given. If \( A \) has the special form

\begin{equation}
A(\tau) = U^{{\rm T}} e^{D \tau} V,
\end{equation}
where the row vector $U^{{\rm T}}$ denotes the transpose of the column vector $U$, the renewal equation is a rewritten version of the compartmental model

\begin{equation}
\begin{aligned}
&\frac{dS}{dt} = -FS,\cr
&\frac{dy}{dt}=D y+(FS)V,\cr
&F = U^{{\rm T}} y,
\end{aligned}
\end{equation}
(see Section 9.3 in \cite{Diekmann2018}). For instance, the SEIR model corresponds to \( n=2 \),
\begin{equation}
V = \begin{pmatrix} 1 \\ 0 \end{pmatrix}, \quad U = \begin{pmatrix} 0 \\ \beta \end{pmatrix}, \quad D = \begin{pmatrix} -\nu & 0 \\ \nu & -\alpha \end{pmatrix}.
\end{equation}

The general idea is that \(y \) describes the size and composition of the subpopulation of individuals
that were infected in the past and might contribute to the current and future force of infection (so ``removed'' individuals are no longer included). Such individuals undergo a Markov process with transition matrix \(D\). The vector \(V\) describes the probability distribution of the state of individuals that enter this subpopulation, so of the newly infected individuals. (Often all components of \(V\) will be zero except one that equals one. But when, for instance, one third of the individuals is asymptomatic, there will be two non-zero components, one equal to \(1/3\) and the other to \(2/3\).)

The vector \(U\) describes how much an individual contributes to the force of infection, given its state. Our key point is that everything that we shall observe below concerning the RE \eqref{RE}, immediately yields results for a gigantic collection of compartmental models.

\subsection{The dynamical system perspective}

A renewal equation (RE) is a delay equation. And a delay equation is a rule for extending a function of time towards the future, on the basis of the (assumed to be) known past. For Delay Differential Equations, the rule takes the form of a differential equation for the value in the point of extension. But, for an RE, the rule specifies that value itself, like in the (linear and translation invariant) example:
\begin{equation}
x(t) = \int_{0}^{\infty} A(\tau) x(t - \tau) d\tau.
\end{equation}

With a delay equation we associate a dynamical system by translation along the extended function. So even though time is a one-dimensional variable, it makes notational sense to denote the current time by \( t \) and to introduce a bookkeeping variable \( \theta \) to keep track of how far in the past things happened. In particular, we define
\begin{equation}
x_t(\theta) := x(t + \theta), \quad \theta \le 0,
\end{equation}
and prescribe for the time \( t = 0 \) an initial condition
\begin{equation}
x_0 = \phi,
\end{equation}
where \( \phi \) is a given function of \( \theta \). The rule is now used to construct \( x(t) \) for \( t > 0 \). (Note that $x$ depends on $\phi$, even though we did not express this in the notation). Via
\begin{equation}
T(t) \phi := x_t,
\end{equation}
we define a semigroup of operators, in other words, a dynamical system. We refer to \cite{Diekmann1995,Diekmann2007,Diekmann2008,Diekmann2011} for relevant theory concerning this class of dynamical systems.

A natural initial condition for the Kermack-McKendrick model consists of the combination of \( S(0) \) and the incidence $\phi(\theta)$ at time \( \theta \leq 0 \). One can use \( N = S(0) + \int_{-\infty}^{0} \phi(\theta) d\theta \) to eliminate \( S(0) \).

\subsection{Computational aspects}\label{sec2.4}

Although numerical methods have been developed (see in particular \cite{BCCR, BrRiVe, DeSeVe, Messina2022a, Messina2022b, Messina2023a, Messina2023b, Messina2024}), a user-friendly tool to solve equation \eqref{RE} numerically does not exist. By way of pseudo-spectral approximation one can reduce a nonlinear RE to a system of ODEs, thus creating the possibility of a numerical bifurcation analysis with well-tested tools (see \cite{Scarabel2021}). But this approach is not very attractive if one wants to use \eqref{RE} in a modeling context. We recommend that, instead, one works with the discrete time version of \eqref{RE}, as introduced in \cite{Diekmann2021}, and described below.

There are, in fact, good reasons to work with a discrete time formalism: 
\begin{itemize}
    \item[i)] the day-night rhythm has a strong impact on behaviour;
    \item[ii)] public health administration uses one day as the basic time unit.
\end{itemize}

In line with the interpretation of force-of-infection, we replace \eqref{2.1} by the relation
\begin{equation}\label{dis1}
s(t+1) = e^{-\hat{F}(t)} s(t)
\end{equation}
for the day-to-day change in the susceptible fraction \(s\). Here \(\hat{F}(t)\) is the force-of-infection cumulative over the time interval \([t,t+1]\). Often modelers assume, implicitly or explicitly, that \(\hat{F}(t)\) is so small, since the time interval is small, that one can replace \(e^{-\hat{F}(t)}\) by the first two terms
${1 - \hat{F}(t)}$ of the Taylor expansion. This approximation reduces the computational costs (especially when, in an MCMC approach for parameter estimation, one has to repeat computations again and again), but it destroys the fundamental structure!

To complete the model formulation, we replace \eqref{RE} by
\begin{equation}\label{dis2}
\hat{F}(t) = \sum_{k=1}^{\infty} A_k (1 - e^{-\hat{F}(t-k)}) s(t - k)
\end{equation}
where, in practice, we implement an upper bound on the indices $k$ for which \(A_k\) is positive. The finite set of positive \(A_k\) constitute the essential parameters of the discrete time Kermack-McKendrick model.

In \cite{Diekmann2021},  it is shown that all the results that we are going to derive in the rest of this section for \eqref{RE} do have a natural counterpart for \eqref{dis1}-\eqref{dis2}. For a COVID motivated use of a discrete time model, see \cite{Kreck2022}.

\subsection{The initial phase}

An epidemic outbreak starts with small numbers of infected individuals. By demographic stochasticity, the pathogen can go extinct, even if it has the potential for exponential growth, see Section 1.2.2. in \cite{Diekmann2013}. Here we ignore this very early phase and focus on the situation where there are very many infected individuals who form a \underline{small fraction} of the host population.

Noting that with \( \Psi \) defined by \eqref{psi} we have

\begin{equation}
\Psi'(0) = N,
\end{equation}
we write the linearized version of the renewal equation \eqref{RE} as

\begin{equation}
x(t) = N \int_{0}^{\infty} A(\tau) x(t - \tau) d\tau.
\end{equation}
This is exactly the linear RE considered by Lotka and Feller in the context of population dynamics (see Chapter 10 in \cite{Bacaer2011}, Chapter 1 in \cite{Inaba2017}). The equation is both linear and translation invariant, which makes it natural to look for solutions of the special form
\begin{equation}
x(t) = e^{\lambda t},
\end{equation}
since an exponential function is characterized by the property that a translate is a multiple.
Thus we find the so-called Euler-Lotka characteristic equation
\begin{equation}\label{EL}
1 = N \int_{0}^{\infty} A(\tau) e^{-\lambda \tau} d\tau.
\end{equation}
For given \( N \) and \( A \), we first consider the right-hand side as a function of the real variable \( \lambda \). The value at zero is denoted by \( R_0 \), i.e.,
\begin{equation}\label{R0}
R_0 := N \int_{0}^{\infty} A(\tau) d\tau.
\end{equation}

This is called the Basic Reproduction Number, since it has the interpretation of the expected number of secondary cases per primary case in a susceptible population of size \( N \). The real root \( \lambda=r \) of \eqref{EL} is called the Malthusian parameter (this root exists unless \( A \) has a fat tail and the Laplace transform of \( A \) ceases to exist before the value 1 is reached). In general, $\Re \lambda<r$ for all complex roots (the exception $\lambda=r$ occurs for instance when we replace $A$ by a measure concentrated in one point; see Feller \cite{Feller1966} for precise conditions).  
Also note that ${\rm sign}(R_0-1)={\rm sign}~ r$, cf. Figure 1.

\

\begin{center}
    \textbf{Figure 1}
\end{center}

\begin{center}
\begin{tikzpicture}
    % 座標軸
    \draw[->] (-0.5,0) -- (5,0) node[right] {$\lambda$};
    \draw[->] (0,-0.5) -- (0,3);
    
    % 縦軸のラベル
    \node[left] at (0,2) {$R_0$};
    
    % 右下がりの単調減少曲線（指数関数的な形状）
    \draw[thick,domain=-0.5:4,samples=100] plot (
        \x, {2*exp(-0.5*\x)}
    );
    
    % 水平点線（y=1）
    \draw[dashed] (0,1) -- ({ln(2)/0.5},1);
    \node[left] at (0,1) {$1$};
    
    % 縦の点線（曲線と水平点線の交点から下へ）
    \draw[dashed] ({ln(2)/0.5},1) -- ({ln(2)/0.5},0) node[below] {$r$};

\end{tikzpicture}
\end{center}

\

For an emerging disease, we can estimate \( r \) from the observed growth rate of the incidence. What we often want is an estimate of \( R_0 \) in order to determine the control effort (for instance, the reduction of the contact intensity) needed to stop the exponential growth. To translate an estimate of \( r \) into an estimate of \( R_0 \), we need information about the "shape" of the graph of \( A \). In this context, the notions of "generation time" and "serial interval" arise.

\subsection{The herd immunity threshold}

Assume that \( R_0 = N \int_{0}^{\infty} A(\tau) d\tau > 1 \). At time \( t \), the susceptible subpopulation is reduced from \( N \) to \( S(t) \). We define
\begin{equation}
R_{\text{eff}}(t) = S(t) \int_{0}^{\infty} A(\tau) d\tau = s(t) R_0,
\end{equation}
and say that the situation is still supercritical if \( R_{\text{eff}}(t) > 1 \), while it is subcritical if \( R_{\text{eff}}(t) < 1 \). The transition occurs when \( t = t^* \) with \( t^* \) defined by the relation
\begin{equation}
s(t^*) = \frac{1}{R_0},
\end{equation}
and we say that {\it herd immunity} is reached at time \( t^* \), to indicate that, on average, primary cases will from now on produce less than one secondary case. Please note that still there may be many new cases, for the simple reason that at \( t^* \) there is a large reservoir of already infected individuals. For the SIR compartmental model, reaching herd immunity coincides with the prevalence \( I \) and the force-of-infection \( F=\beta I \) reaching their maximum, but this is a special feature of that particular model and not a general fact.
We refer to \cite{Nguyen2023} for an analysis of the overshoot of the peak.
The complementary fraction \( 1 - \frac{1}{R_0} \) of victims-so-far is called the {\it Herd Immunity Threshold}.

\subsection{The final size}

The interpretation makes clear that the solution \( w \) of \eqref{RE}, with an appropriate initial condition, is a monotone non-decreasing function of time. By the boundedness of \( \Psi \) and the assumed integrability of \( A \), \( w \) has to be bounded. So the limit \( w(\infty) \) exists. By passing to the limit in \eqref{RE}, we find that
\begin{equation}
w(\infty) = \int_{0}^{\infty} A(\tau) d\tau \cdot \Psi(w(\infty)),
\end{equation}
which, using \eqref{st}, \eqref{psi}, and \eqref{R0}, we can rewrite as
\begin{equation}\label{sinf}
s(\infty) = e^{-R_0 (1 - s(\infty))}.
\end{equation}

Please note that \( s(\infty) \) is
 fully determined by \( R_0 \). This can be understood by viewing \eqref{sinf} as a probabilistic consistency condition.
On the left-hand side, we have the fraction that escapes infection, while on the right-hand side \( R_0 (1 - s(\infty)) \) is the total force of infection in an outbreak in which a fraction \( 1 - s(\infty) \) gets infected (note that it makes sense to consider the expected contribution, since a fraction of a large host population is a large number). So the right hand side of \eqref{sinf} equals the probability that a susceptible individual escapes infection. In Figure 2, we depict \( 1 - s(\infty) \) as a function of \( R_0 \). Please note the sharp rise beyond the value 1 of \( R_0 \).

\

%\begin{figure}[h]
%\centering
%\includegraphics[width=0.5\textwidth]{figure2.png}
%\caption{The final size \( 1 - s(\infty) \) as a function of \( R_0 \).}
%\end{figure}

\begin{center}
    \textbf{Figure 2}
\end{center}

\begin{center}
\begin{tikzpicture}
    % 座標軸
    \draw[->] (-0.5,0) -- (5,0) node[right] {$R_0$};
    \draw[->] (0,-0.5) -- (0,3);
    
    % 縦軸のラベル
    \node[left] at (0,3) {$1-s(\infty)$};
    
    % 右下がりの単調減少曲線（指数関数的な形状）
    \draw[thick,domain=1:5,samples=100] plot (
        \x, {2-2*exp(-1.5*(\x-1))}
    );
    
    % 水平点線（y=1）
    \draw[dashed] (0,2) -- (5,2);
    \node[left] at (0,2) {$1$};

% 横軸との交点に「1」のラベルを追加
    \node[below] at (1,0) {1};

       \end{tikzpicture}
\end{center}
   
\

\section{Incorporating static heterogeneity}

Again the total population size will be denoted by $N$, but now we assume that individuals are characterized by a trait $x$, taking values in a measurable space $\Omega$, so in a space equipped with a $\sigma$-algebra. The trait is static, i.e., it does not change in the course of time. The composition of the population is described by the unit measure $\Phi$ on $\Omega$. The use of measures allows us to combine continuum models, with the trait distribution described by a density, and discrete models, involving countably many types of individuals, in one and the same framework.  

We refer to \cite{Bootsma2024, Diekmann2023, Inaba2025, Thieme2024b} for recent relevant theory concerning the Kermack-McKendrick model with static heterogeneity. The first trait to be considered may have been spatial location \cite{Diekmann1978, Thieme1977}, other traits have been susceptibility \cite{Gomes2022}, resistance to infection \cite{Ponce2024}, mask compliance \cite{Bootsma2023} and household size \cite{Luckhaus2023a, Luckhaus2023b}.  

Please note that, in the case of spatial position and non-compact domains, outbreak dynamics is dominated by spatial expansion with an asymptotic speed of propagation corresponding to the lowest possible speed of travelling wave solutions. Here we do not discuss this interesting phenomenon, but refer interested readers to \cite{Alanazi2020}, \cite{Rass2003}, \cite{Roq2024} and the references given there.

\subsection{Model formulation}

Apart from $N$ and $\Phi$, we just need one model ingredient:

\begin{equation}\label{A}
\begin{aligned}
A(\tau,x,\xi)=&\text{the expected contribution to the force of infection on an individual}\cr 
&\text{with trait $x$ of an individual with trait \(\xi\) that became infected \(\tau\) }\cr
&\text{units of time ago}
\end{aligned}
\end{equation}

Let $S(t, \omega)$ denote the number of susceptibles, at time $t$, with trait belonging to the (measurable) subset $\omega$. We assume:

\begin{equation} \label{eq:3.2}
    S(t, \omega) = N \int_\omega s(t, x) \, \Phi(dx),
\end{equation}
where, for each $t$, $s(t, \cdot)$ is a measurable function defined on $\Omega$ and taking values in $[0, 1]$. In words: $s(t, x)$ is the trait-specific fraction that is still susceptible at time $t$.

With $F(t, x)$ denoting the force-of-infection at time $t$ on individuals with trait $x$, and assuming that $s$ is partially differentiable with respect to $t$, we have:

\begin{equation} \label{3.3}
    \frac{\partial s}{\partial t}(t, x) = -F(t, x) s(t, x),
\end{equation}
and, by integration,

\begin{equation} \label{3.4}
    s(t, x) = \exp\left(-\int_{-\infty}^t F(\tau, x) \, d\tau\right).
\end{equation}

The constitutive equation for $F$ now takes the form:

\begin{equation} \label{3F}
    F(t, x) = \int_{0}^{\infty}\int_\Omega A(\tau,x, \xi) \left(-N\frac{\partial s}{\partial t}(t-\tau, \xi)\right) \Phi(d\xi) d\tau,
\end{equation}
and when we integrate both sides of this equality with respect to time from $-\infty$ to $t$, interchange the order of the integrals and substitute the result into \eqref{3.4}, we obtain the abstract renewal equation (RE):

\begin{equation} \label{3RE}
s(t,x) = \exp\left(-N\int_{0}^{\infty}\int_{\Omega}A(\tau,x,\xi)[1-s(t-\tau,\xi)]\Phi(d\xi)d\tau\right)
\end{equation}
which serves as the starting point for the analysis. In the present section, we provide information about the current knowledge concerning \eqref{3RE}, while following exactly the same scheme as in Section 2. In Section 4, we shall formulate and prove some new results.

\subsection{Compartmental models are included}

When functions $a$, $b$ and $c$ exist such that
\begin{equation}\label{abc}
   A(\tau, x, \xi) = a(x) b(\tau) c(\xi),
   \end{equation}
we deduce from \eqref{3RE} that 
\begin{equation}\label{aw}
s(t,x)=e^{-a(x)w(t)}.
\end{equation}
with
\begin{equation}\label{w}
w(t)=N\int_{0}^{\infty}\int_{\Omega}b(\tau)[1-s(t-\tau,\xi)]c(\xi)\Phi(d\xi)d\tau.
\end{equation}
or, equivalently,
\begin{equation}\label{3.9} 
 w(t) = \int_{0}^{\infty} b(\tau) \Psi(w(t-\tau)) d\tau,
 \end{equation}
with $\Psi$ now defined by
\begin{equation}\label{3.10}
\Psi(w) := N \int_{\Omega}  c(\xi) (1 - e^{- a(\xi) w} ) \Phi(d\xi). 
\end{equation}

 Note that from \eqref{3.10} we recover the earlier definition \eqref{psi} in the homogeneous case that both $a$ and $c$ are identically equal to 1.
 Also note that when $a$ is identically equal to one, \eqref{3.10} states that we can simply work with the average value of $c$. 

    Essentially, \eqref{RE} and \eqref{3.9} are the same. When $a$ is not constant, \eqref{3.10} differs from \eqref{psi}.
   When
\begin{equation}\label{3.11}
  	b(\tau) = U^{{\rm T}} e^{\tau D} V,
  	\end{equation}
we can define a vector $Q$ as 
\begin{equation}\label{3.12}
 \begin{aligned}
  Q(t) :&= \int_{0}^{\infty} e^{\tau D} V \Psi(w(t-\tau)) d\tau \cr
  &= \int_{-\infty}^{t} e^{(t-\sigma)D} V \Psi(w(\sigma)) d\sigma.
\end{aligned}
\end{equation}
Then it follows that
\begin{equation}\label{3.13}
\frac{dQ}{dt} = D  Q + V \Psi(w).
\end{equation}
On the other hand, it follows from \eqref{3.11}, \eqref{3.12} and \eqref{RE} that

\begin{equation}\label{3.14}
  w(t) = U^{{\rm T}} Q(t).
\end{equation}
 Combining \eqref{3.13} and \eqref{3.14} we obtain the closed system of ODE (the {\it integrated form} of the compartmental model, see \cite{Diekmann2023}):
\begin{equation}\label{3.15}
\frac{dQ}{dt} = D  Q + V \Psi(U^{{\rm T}} Q).
\end{equation}

Note that, conversely, given a solution of \eqref{3.15}, we can define $w$ by \eqref{3.14} and verify that $w$ satisfies \eqref{3.9}.

We conclude that in terms of the integrated formulation of a compartmental model, we can incorporate separable heterogeneity by simply redefining the function $\Psi$. The fact that \eqref{3.10} involves an integral is of little to no importance for the theory. But when one wants to study \eqref{3.15} numerically, it is a nuisance. In special cases, one can replace the integral by an explicit expression. 

\

{\bf Example:}
Let $\Omega = [0, \infty)$ and let $a(x) = x$, i.e., let the trait correspond directly to relative susceptibility. 
Let $\Phi$ be the Gamma Distribution with mean 1 and variance $p^{-1}$ . In other words, let $\Phi$ have density
\begin{equation}\label{g7.1}
       x \to   \frac{p^p}{\Gamma(p)}  x^{p-1}  e^{-px}.
       \end{equation}

The key feature is that under these assumptions we can evaluate the integral in \eqref{3.10} when $c$ is a (low order) polynomial and thus obtain an explicit expression for $\Psi$. The underlying reason is that we deal with (a derivative of) the Laplace Transform of $\Phi$, which is itself explicitly  given by
\begin{equation}\label{g7.2}
   \hat{\Phi}(\lambda)  = \left(\frac{\lambda}{p} + 1\right)^{-p}.
\end{equation}

If the trait has no influence on infectiousness, i.e., $c$ is identically equal to 1, we have
\begin{equation}\label{g7.3}
   \Psi(w) =N\left[ 1 - \left(\frac{w}{p} + 1\right)^{-p} \right],
\end{equation}
while if infectiousness too is proportional to the trait, i.e., $c(\xi)=\xi$, we obtain
\begin{equation}\label{g7.4}
   \Psi(w) = N\left[1 - \left(\frac{w}{p} + 1\right)^{-p-1}\right].
   \end{equation}
In \cite{Bootsma2024}, we compare and contrast these special cases in terms of $R_0$, the Herd Immunity Threshold and the final size.

\subsection{The dynamical system perspective}

Equation \eqref{3RE} is an abstract RE. This name indicates that the unknown $s$ is a function of time, taking values in an infinite-dimensional space, viz., the space of bounded measurable functions on $\Omega$. For general theory, we refer to \cite{Janssens2020} and the references given there.
For equation \eqref{3RE}, we can actually employ monotone iteration to constructively define its solution. This will be done in Section 4.

\subsection{Computational aspects}

As far as we know, sophisticated numerical methods that directly apply to \eqref{3RE} do not exist. Moreover, it seems unlikely that a modeler has enough information to identify the measure $\Phi$ and the function $A$ of three variables. Pragmatism therefore suggests working with discrete time steps, cf. (2.15)-(2.16), and to restrict to finitely many types, i.e., a measure $\Phi$ supported in finitely many points. In this case, the model ingredient $A$ is replaced by a family of (contact) matrices indexed by the discrete 'age of infection,' and numerical simulations are easy to perform (which certainly helps to make MCMC parameter identification feasible).

\subsection{The initial phase}

Equation \eqref{3RE} admits the disease-free steady-state solution $s=1$ identically. 
 Inserting
\begin{equation}
s(t,x):=1-y(t,x),
\end{equation}
into \eqref{3RE} and assuming that \(y\) is small, we obtain, upon neglecting the higher order terms in the Taylor expansion, the linearized equation
\begin{equation}\label{y}
y(t,x)=N\int_{0}^{\infty}\int_{\Omega}A(\tau,x,\xi)y(t-\tau,\xi)\Phi(d\xi)d\tau.
\end{equation}

We refer to Sections 5 and 6 of \cite{Thieme1985} for an early profound analysis of such linear equations within the setting of positive operator theory. The general idea is to define the Basic Reproduction Number $R_0$ in two steps \cite{Diekmann1990}: first, one defines the Next Generation Operator (NGO) by
\begin{equation}\label{R0}
(K\phi)(x):= N \int_{\Omega} k(x,\xi)\phi(\xi) \Phi(d\xi),
\end{equation}
with
\begin{equation}\label{k}
k(x,\xi): = \int_{0}^{\infty} A(\tau,x,\xi)d\tau,
\end{equation}
and then one defines $R_0$ as the spectral radius of the NGO: 
\begin{definition}\label{defR0}
$R_0 :=\rho(K)$,
where $\rho(K)$ denotes the spectral radius of the operator $K$.
\end{definition}

The aim is to demonstrate that $R_0$ has a threshold value of 1, meaning in particular that $y$ grows exponentially when $R_0 > 1$ while decaying exponentially when $R_0 < 1$. In this context, it is helpful when $R_0$ is actually an eigenvalue of the NGO, so one makes assumptions on the model ingredients that guarantee that this is indeed the case. We refer to \cite{Inaba2025} for a recent account of this approach. Also see \cite{Franco2023}.

In Section 4, we shall formulate and prove some new threshold results under much weaker assumptions. These are formulated in terms of the final size. The proof exploits a link, established in \cite{Bootsma2024}, between the NGO in terms of fractions and the final size equation.

\subsection{The herd immunity threshold}

Linearising the RE \eqref{3RE} at time $t$, we obtain
  \begin{equation}\label{z}
z(t,x)=Ns(t,x)\int_{0}^{\infty}\int_{\Omega}A(\tau,x,\xi)z(t-\tau,\xi)\Phi(d\xi)d\tau,
\end{equation}
where $z(t,x)$ is a small perturbation from $s(t,x)$.  For a fixed time $t$, we can define the {\it effective next generation operator} at time $t$ as
\begin{equation}\label{Re}
(K_{\text{eff}}\phi)(x):= N s(t,x)\int_{\Omega} k(x,\xi)\phi(\xi) \Phi(d\xi).
\end{equation}

If we replace, at the right hand side of \eqref{Re}, $s(t,x)$ by the vaccination coverage, we can similarly define the effective next generation operator for the vaccination. 
Then the {\it effective reproduction number} $R_{\text{eff}}$ is defined as the spectral radius of $K_{\text{eff}}$ \cite{Inaba2025}.

The condition $R_{\text{eff}} = 1$ defines a codimension-one manifold in an infinite-dimensional space. It depends on the initial condition (so on the precise way in which the outbreak is triggered) where the orbit will intersect this manifold. As a consequence, the HIT is not a well-defined general concept.

But in the special case that the kernel $A$ factorizes, one can revert back to the one-dimensional situation, define the HIT, and analyze how it is influenced by the heterogeneity. This is done in Sections 6 and 7 of \cite{Bootsma2024}, and it leads to epidemiologically relevant insights.

\subsection{The final size equation}

Exactly as the interpretation suggests, $t \mapsto s(t,x)$ is monotone non-increasing for each $x$ in $\Omega$. In Section 4, we show that, by passing to the limit, one obtains from \eqref{3RE} a final size equation involving the NGO in terms of fractions and a substitution/Nemytskii operator. For proving the threshold property, this characterization of the final size is key.

\section{Threshold theorem}

 Recall the interpretation of $A(\tau,x,\xi)$ presented in (3.1). So far we did not yet explicitly formulate relevant  assumptions concerning $A$, but we shall do so now. We assume  that the function
\[           A : \mathbb R_+ \times \Omega \times \Omega \to  \mathbb R \]
\begin{enumerate}
\item[]— is non-negative and measurable, 
\item[]— is integrable over $\mathbb R_+ \times \Omega$ with respect to the first and third variables,
\item[]— is such that k defined by (3.24) is bounded.
\end{enumerate}

We now first explain how to construct the relevant solution of \eqref{3RE}. Here we essentially follow \cite{Thieme2024b}. For $t > 0$, we replace \eqref{3RE} by
\begin{equation}\label{4RE1}
s(t,x) = \exp\left(-N\int_{0}^{t}\int_{\Omega}A(\tau,x,\xi)[1-s(t-\tau,\xi)]\Phi(d\xi)d\tau-h(t,x)\right), 
\end{equation}
\begin{equation}\label{4RE2}
h(t,x) := \int_{0}^{t}F_{0}(\sigma, x)d\sigma.
\end{equation}
where $F_0$ is bounded, measurable, non-negative and integrable over $\mathbb R_+$ with respect to its first argument.

This means that we split the cumulative force-of-infection into the part contributed by individuals that became infected after \( t=0 \) and the part $h$ contributed by individuals that became infected before \( t=0 \). We have in mind that the latter part is rather small. Note that \( s(0,x)=1 \), which is strictly speaking inconsistent with ``individuals infected before \( t=0 \)." The inconsistency is removed if we re-interpret \(\Phi\) and \( N \) as pertaining to the susceptible population at \( t=0 \). The point of ``rather small" is that the modified \(\Phi\) and \( N \) are very close to the overall \(\Phi\) and \( N \).

We construct the solution \(s(t,x)\) of \eqref{4RE1} by monotone iteration, as follows: Define

\begin{equation}
 s_0(t,x) = e^{-h(t,x)},
\end{equation}
and inductively, for \(n \geq 0\),

\begin{equation}\label{sn}
s_{n+1}(t,x) = \exp\left(-N\int_{0}^{t} \int_{\Omega} A(\tau,x,\xi)[1 - s_n(t-\tau,\xi)]\Phi(d\xi)d\tau-h(t,x)\right).
\end{equation}

The monotonicity of \(x \mapsto e^{-x}\) guarantees that $s_1 \le s_0$.  Since the right hand side of \eqref{sn} is non-decreasing with respect to $s_n$, we have inductively 
\(s_{n+1}(t,x) \leq s_n(t,x)\), for all $t > 0$ and \(x \in \Omega\). Clearly \(s_n(t,x) \geq 0\). 
By the Beppo Levi Theorem on monotone convergence, we can pass to the limit at the right hand side of the recurrence relation \eqref{sn}.  We conclude that the limit
\begin{equation}
s(t,x) = \lim_{n \to \infty} s_n(t,x),
\end{equation}
exists and satisfies \eqref{4RE1}.

The ``generation interpretation" of the recurrence relation implies that we constructed the biologically relevant solution of \eqref{4RE1}. So there is no need to discuss uniqueness. But one can anyhow remark that any other solution \(\tilde{s}(t,x)\) with \(0 \leq  \tilde{s}(t,x) \leq 1\) satisfies
\begin{equation}
\tilde{s}(t,x) \leq s(t,x),
\end{equation}
since \(\tilde{s}(t,x) \leq s_0(t,x)\) and therefore \(\tilde{s}(t,x) \leq s_1(t,x)\) and, by induction, \(\tilde{s}(t,x) \leq s_n(t,x)\) for all \(n\). So \(s(t,x)\) is the maximal solution.

Since, by assumption, \(F_0(t,x) \geq 0\) in \eqref{4RE2} the function \(t \mapsto s_0(t,x)\) is monotone non-increasing for all \(x \in \Omega\). By induction the same holds for \(t \mapsto s_n(t,x)\) and therefore the limit \(s(t,x)\) inherits this property. Since \(s(t,x) \geq 0\), the limit \(s(\infty, x)\) exists for all \(x \in \Omega\). Using Beppo Levi's monotone convergence theorem once more, we find that
\begin{equation}\label{4.7}
s(\infty, x) = e^{-\int_{0}^{\infty} F_0(\sigma, x) d\sigma} \cdot e^{-(K(1-s(\infty, \cdot)))(x)},
\end{equation}
where $K$ is the linear operator defined in \eqref{R0}.

 Equation \eqref{4.7} is in fact a probabilistic consistency condition. At the left hand side we have the fraction that escapes infection, while at the right hand side we have the probability to escape infection, given the cumulative force-of-infection generated by individuals infected before and after $t=0$. 

     Note that the second factor at the right hand side of \eqref{4.7} is the composition of a Nemytskii operator and the linear operator $K$ acting on the function $x  \to 1 - s(\infty,x)$. We consider $K$ as a bounded linear operator on the Banach space $M^b(\Omega)$ of bounded measurable functions defined on $\Omega$, equipped with the supremum norm. Since $k$ defined by \eqref{k} is non-negative, $K$ is a positive operator, i.e., $K$ maps the positive cone $M_+^b(\Omega)$, consisting of functions taking non-negative values, into itself. 

     We now explain the interpretation of $K$ as the Next-Generation-Operator acting on fractions. Let the function $\phi$ describe, in terms of fractions, the composition of a generation of infected individuals. Provided the generations are, in terms of fractions, very small, the function $K\phi$ describes, again in terms of fractions, the composition of the next generation, i.e., the individuals infected by those described by $\phi$. The ‘very small’ is used when we linearize, i.e., replace “$1 - \exp(- \text{cumulative force-of-infection})$” by “cumulative force-of-infection”. That the next generation is obtained by applying the linear operator $K$ is an approximation!

     Now recall Definition \ref{defR0} : the Basic Reproduction Number $R_0$ is, by definition, the spectral radius of $K$. As already noted in Section 3.5, it is perfectly possible to make additional assumptions on $A$ such that one can prove that
\begin{enumerate}
\item[]— $R_0$ is an eigenvalue of $K$,  

\item[]— the linearized equation (3.22) has a solution of the special form $y(t,x) = e^{rt} \psi(x)$ with $r \in \mathbb R$  and ${\rm sign}(r) = {\rm sign}(R_0 - 1)$,
\item[]— the asymptotic large time behaviour of general positive solutions of \eqref{y} is of this special form,
\end{enumerate}
see \cite{Franco2023, Inaba2025}. But here we follow the approach of \cite{Thieme2024b}, which consists of checking that the adjoint of $K$ has $R_0$ as an eigenvalue and next using the corresponding eigenfunctional to deduce certain estimates for the solution of the final size equation. As we shall see, these estimates do not require additional assumptions and they imply in a somewhat different, yet quite meaningful, way that $R_0$ has threshold value one.

The following two properties
\begin{enumerate}
\item[]— the cone $M_+^b(\Omega)$ has non-empty interior,
\item[]— the supremum norm is monotone,
\end{enumerate}
are important, since they allow us to refer to the rather general Corollary 11.17 in \cite{Thieme2024a} (see also Appendix 2.6 in \cite{Sch1966} and Theorem 3.1 in \cite{Thieme2024b}) as justification of

\begin{theorem}\label{th4.1}
Assume that $R_0 > 0$. Then there exists a positive bounded linear functional $\theta : M^b(\Omega) \to \mathbb R$ such that $\theta(K\phi)=R_0\theta(\phi)$ or, in words, $R_0$ is an eigenvalue of the adjoint of $K$ with eigenfunctional $\theta$.
\end{theorem}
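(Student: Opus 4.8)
The plan is to recognize Theorem \ref{th4.1} as an instance of the Krein--Rutman theorem for the adjoint of a positive operator, and to supply a proof that uses only the two structural features highlighted just before the statement: the positive cone $M_+^b(\Omega)$ has non-empty interior, and the supremum norm is monotone. The constant function $u_0 \equiv 1$ is an interior point of the cone (if $\|\phi\|_\infty < 1$ then $u_0 + \phi \ge 0$), i.e. an order unit: every $\phi$ with $\|\phi\|_\infty \le 1$ satisfies $-u_0 \le \phi \le u_0$. Monotonicity of the norm says $0 \le \phi \le \psi$ implies $\|\phi\|_\infty \le \|\psi\|_\infty$, so the cone is normal. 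The order unit has two consequences I would exploit repeatedly: first, a nonzero positive functional $\theta$ necessarily has $\theta(u_0) > 0$; second, the set
\[ C := \{ \theta \in M^b(\Omega)^* : \theta \ge 0, \ \theta(u_0) = 1\} \]
is a non-empty, convex, norm-bounded (indeed $\|\theta\| \le 1$, using $-u_0 \le \phi \le u_0$), hence weak-$*$ compact base of the dual cone. The target functional will be produced as a weak-$*$ limit point inside $C$.

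Next I would locate $R_0 = \rho(K)$ in the spectrum and manufacture approximate eigenvectors. Since $K \ge 0$, for $\lambda > R_0$ the resolvent is the positive operator $R(\lambda) = (\lambda - K)^{-1} = \sum_{n\ge0}\lambda^{-n-1}K^n$. Order-unit domination together with the monotone norm gives $\|R(\lambda)\| \le \|R(\lambda)u_0\|_\infty$, while $R_0 \in \sigma(K)$ (the spectral radius of a positive operator with normal cone always lies in the spectrum) forces $\|R(\lambda)\| \to \infty$ as $\lambda \downarrow R_0$. Hence $v_\lambda := R(\lambda) u_0 \ge 0$ has $\|v_\lambda\|_\infty \to \infty$, and from $(\lambda - K)v_\lambda = u_0$ the normalized vectors $w_\lambda := v_\lambda/\|v_\lambda\|_\infty$ are positive, of norm one, and satisfy $(\lambda - K)w_\lambda = u_0/\|v_\lambda\|_\infty \to 0$. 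These are positive approximate eigenvectors for the eigenvalue $R_0$; in the absence of compactness of $K$ they need not converge in $M^b(\Omega)$, which is precisely why I pass to the adjoint.

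On the dual side I would fix a supporting functional $\mu \in C$ and set $\theta_\lambda := R(\lambda)^*\mu / (R(\lambda)^*\mu)(u_0)$, which is well defined and lies in $C$ because $R(\lambda)^*$ is positive and $(R(\lambda)^*\mu)(u_0) = \mu(v_\lambda) \ge \lambda^{-1}\mu(u_0) > 0$. By weak-$*$ compactness of $C$ there is a sequence $\lambda_n \downarrow R_0$ with $\theta_{\lambda_n} \to \theta$ weak-$*$, and $\theta \in C$, so $\theta \ge 0$ and $\theta(u_0) = 1$, in particular $\theta \ne 0$. The resolvent identity in adjoint form, $K^* R(\lambda)^* = \lambda R(\lambda)^* - I$, yields $K^*\theta_\lambda = \lambda\,\theta_\lambda - \mu/\mu(v_\lambda)$; since $K^*$ is weak-$*$ continuous and $\lambda_n \to R_0$, passing to the limit gives $K^*\theta = R_0\,\theta$ as soon as the inhomogeneous term $\mu/\mu(v_{\lambda_n})$ vanishes weak-$*$. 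This $\theta$ is then the functional claimed in the statement.

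The main obstacle is exactly that last point: guaranteeing that the normalization $\mu(v_\lambda)$ diverges, so that the limiting functional solves the homogeneous equation $K^*\theta = R_0\theta$ rather than an inhomogeneous perturbation of it. Because no irreducibility of $K$ is assumed, a carelessly chosen $\mu$ might fail to ``see'' the blow-up direction of $v_\lambda$, and $\mu(v_\lambda)$ could stay bounded. Overcoming this requires either a choice of $\mu$ adapted to the order unit, or --- more robustly --- replacing the pointwise normalization by a Hahn--Banach separation argument on the weak-$*$ compact base $C$: if $R_0$ were not an eigenvalue of $K^*$ on $C$, the weak-$*$ compact convex set $\{K^*\theta - R_0\theta : \theta \in C\}$ would miss the origin and could be strictly separated by an element $x \in M^b(\Omega)$, producing a vector $Kx - R_0 x$ in the interior of the cone and thereby contradicting $\rho(K) = R_0$. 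This separation bookkeeping is delicate (notably the positivity of the separating element) and is precisely what the cited general results --- Corollary 11.17 in \cite{Thieme2024a}, Appendix 2.6 in \cite{Sch1966} and Theorem 3.1 in \cite{Thieme2024b} --- carry out, once the two structural hypotheses above have been verified.
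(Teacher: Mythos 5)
Your proposal is correct in substance and, at the level of what the paper itself actually does, matches it exactly: the paper offers no proof of Theorem \ref{th4.1} beyond verifying the same two structural facts you verify (the cone $M_+^b(\Omega)$ has the interior point $u_0\equiv 1$, and the supremum norm is monotone, hence the cone is normal) and then invoking Corollary 11.17 of \cite{Thieme2024a} (alternatively Appendix 2.6 of \cite{Sch1966} or Theorem 3.1 of \cite{Thieme2024b}). What you add is a sketch of the proof of that cited general result, and your sketch is essentially the classical Krein--Rutman-for-the-adjoint argument; the obstacle you flag at the end is the genuine one, but it is closed by a standard device that you circle around without quite naming: do not fix $\mu$ in advance, but let it depend on $\lambda$. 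Since $v_\lambda = R(\lambda)u_0 \ge 0$ and, in an order-unit space, $\|v\|_\infty = \sup\{\theta(v): \theta \in C\}$ for $v \ge 0$ (point evaluations already lie in $C$), one can choose $\mu_\lambda \in C$ with $\mu_\lambda(v_\lambda) \ge \tfrac12\|v_\lambda\|_\infty$; then $\theta_\lambda := R(\lambda)^*\mu_\lambda/\mu_\lambda(v_\lambda) \in C$ and the inhomogeneous term $\mu_\lambda/\mu_\lambda(v_\lambda)$ tends to $0$ in norm because $\|\mu_\lambda\|\le 1$ while $\mu_\lambda(v_\lambda)\ge\tfrac12\|v_\lambda\|_\infty\to\infty$, so any weak-$*$ cluster point of $(\theta_\lambda)$ in $C$ satisfies $K^*\theta=R_0\theta$, with no irreducibility and no Hahn--Banach separation needed. (Two cosmetic points: weak-$*$ compactness of $C$ yields a convergent subnet rather than a subsequence, since $M^b(\Omega)$ is generally nonseparable; and the step $R_0\in\sigma(K)$ uses that the cone is both normal and generating, which your two verified hypotheses do supply.) So: same reduction as the paper, plus a mostly complete proof of the black box, with one repairable gap.
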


Now let’s return to equation \eqref{4.7}.
In terms of the final cumulative force-of-infection
\begin{equation}
u(x) := - \ln s(\infty, x),
\end{equation}
we can write this equation in the form
\begin{equation}\label{u}
u = h_\infty + F(u),
\end{equation}
where
\begin{equation}\label{h}
h_\infty(x) := h(\infty,x)=\int_{0}^{\infty} F_0(\sigma, x) d\sigma,
\end{equation}
and
\begin{equation}\label{F}
F(u) = K (1-e^{-u(\cdot)}).
\end{equation}

We mention that these equations can alternatively
be derived directly in terms of the cumulative
force of infection \cite{Thieme2024b}.
We shall use equation \eqref{u} to deduce certain properties of any positive solution \( u \). So there is no need to discuss the uniqueness of a positive solution. In fact, in the present generality, there can be many solutions.

Note that, since \( K \) is positive and \( f'(y) \geq 0 \), \( F \) is order preserving. 
 Moreover, we have
 
\begin{lemma}\label{lem4.2}
For all \(\phi \in M^b_+(\Omega)\)
\begin{equation}
e^{-\sup \phi} K\phi \leq F(\phi) \leq K\phi.
\end{equation}
\end{lemma}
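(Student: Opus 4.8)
The plan is to prove the two inequalities in Lemma \ref{lem4.2} by reducing them, via the positivity of $K$, to pointwise scalar inequalities for the real function $g(w) := 1 - e^{-w}$ on $[0,\infty)$. Recall that $F(\phi) = K(1 - e^{-\phi(\cdot)})$, so everything hinges on comparing the nonnegative function $\psi(\xi) := 1 - e^{-\phi(\xi)}$ against $\phi$ itself. First I would record the two elementary scalar facts: for every $w \ge 0$ one has $g(w) = 1 - e^{-w} \le w$ (which gives the upper bound), and $g(w) = 1 - e^{-w} \ge e^{-w} \cdot w \ge e^{-M} w$ whenever $0 \le w \le M$ (which gives the lower bound). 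The second inequality follows because $g(w)/w = (1-e^{-w})/w$ is decreasing in $w$, so its value at $w$ is at least its value at the largest relevant argument $M = \sup \phi$; equivalently, from $1 - e^{-w} = \int_0^w e^{-t}\,dt \ge w\,e^{-w} \ge w\,e^{-M}$.

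Next I would lift these scalar bounds to functions and then apply $K$. Writing $M = \sup \phi$ (finite since $\phi \in M^b_+(\Omega)$), the two facts above read, pointwise in $\xi$,
\begin{equation}
e^{-\sup\phi}\,\phi(\xi) \;\le\; 1 - e^{-\phi(\xi)} \;\le\; \phi(\xi).
\end{equation}
Because $K$ is a positive linear operator (it maps $M^b_+(\Omega)$ into itself, as $k$ in \eqref{k} is nonnegative), it is order preserving: $\phi_1 \le \phi_2$ implies $K\phi_1 \le K\phi_2$. Applying $K$ to the chain of inequalities above, and using linearity together with $K(e^{-\sup\phi}\phi) = e^{-\sup\phi}\,K\phi$ (since $e^{-\sup\phi}$ is a scalar constant), yields
\begin{equation}
e^{-\sup\phi}\,K\phi \;\le\; K\bigl(1 - e^{-\phi(\cdot)}\bigr) \;\le\; K\phi,
\end{equation}
which is precisely $e^{-\sup\phi} K\phi \le F(\phi) \le K\phi$.

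I do not anticipate a genuine obstacle here, as the statement is essentially a packaging of convexity-type estimates for $1 - e^{-w}$ combined with positivity of $K$. The one point that deserves care is making sure the scalar lower bound is applied with the correct uniform constant: the ratio $(1-e^{-w})/w$ depends on $w$, and the bound $1 - e^{-w} \ge e^{-\sup\phi}\,w$ is only valid because $\phi(\xi) \le \sup\phi$ for every $\xi$, so the worst case over the support of $\phi$ is controlled by the single constant $e^{-\sup\phi}$. A secondary detail is confirming that $1 - e^{-\phi(\cdot)}$ lies in $M^b_+(\Omega)$ so that $K$ may legitimately be applied to it; this is immediate, since $0 \le 1 - e^{-\phi(\xi)} \le 1$ and the function is measurable as a composition of the measurable $\phi$ with the continuous map $w \mapsto 1 - e^{-w}$. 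With these two checks in place the proof is complete.
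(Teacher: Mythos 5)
Your proposal is correct and follows essentially the same route as the paper: both reduce the claim to the pointwise scalar inequalities $w\,e^{-w} \le 1 - e^{-w} \le w$, bound $e^{-\phi(\xi)}$ below by the constant $e^{-\sup\phi}$, and then invoke the positivity and linearity of $K$ (the only cosmetic difference is that you verify $1-e^{-w}\ge w\,e^{-w}$ via the integral representation, while the paper differentiates $1-e^{-w}-w\,e^{-w}$).
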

\begin{proof} 
Let $f(y)=1-e^{-y}$. The inequality
\begin{equation}
F(\phi) \leq K\phi,
\end{equation}
follows directly from
\begin{equation}
f(y)  \leq y, \, y \geq 0.
\end{equation}
For the other inequality, we use
\begin{equation}
f(y)  \geq ye^{-y},
\end{equation}
which follows by observing that we have equality for \( y = 0 \) while
\begin{equation}
\frac{d}{dy} \left[ 1 - e^{-y} - ye^{-y} \right] = ye^{-y} > 0,
\end{equation}
for \( y > 0 \).
All that remains is to observe that
\begin{equation}
e^{-\phi(x)} \geq e^{-\sup \phi},
\end{equation}
for \(\phi \in M^b_+(\Omega)\).
\end{proof}

For the above Lemma \ref{lem4.2}, the reader may refer to Section 2.2 in \cite{Thieme2024b}.

\begin{lemma}\label{lem4.3}
If \( R_0 < 1 \) then
\begin{equation}
u \leq (I-K)^{-1}h_\infty,
\end{equation}
and consequently
\begin{equation}
s(\infty, x) = e^{-u(x)} \uparrow 1 \quad \text{if} \quad \sup h_\infty \downarrow 0.
\end{equation}
\end{lemma}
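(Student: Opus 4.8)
The plan is to reduce the nonlinear final size equation \eqref{u} to a \emph{linear} order inequality that can be inverted once $R_0 = \rho(K) < 1$. The engine is the upper bound in Lemma \ref{lem4.2}: since $F(u) \le Ku$, the identity $u = h_\infty + F(u)$ gives immediately the affine estimate $(I-K)u \le h_\infty$, where $\le$ is understood pointwise, i.e.\ in the order of the cone $M^b_+(\Omega)$. Everything then hinges on applying a \emph{positive} inverse of $I-K$ to both sides, and it is precisely the assumption $R_0 < 1$ that makes such an inverse available.

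Before inverting I would check that $u$ genuinely lives in $M^b_+(\Omega)$, so that all manipulations take place in the Banach space at hand. This is built into \eqref{u}: one has $u = h_\infty + K(1 - e^{-u})$ with $0 \le 1 - e^{-u} \le 1$, hence $0 \le u \le h_\infty + K\mathbf{1}$, where $\mathbf 1$ denotes the constant function one; since $K$ is bounded (because $k$ is bounded by assumption) and $h_\infty$ is bounded, $u$ is a bounded non-negative measurable function. Next I would construct $(I-K)^{-1}$ and record its positivity. Because $\rho(K) = R_0 < 1$, Gelfand's formula yields convergence in operator norm of the Neumann series $\sum_{n=0}^{\infty} K^n$ to a bounded operator, which is $(I-K)^{-1}$. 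As $K$ is positive, every partial sum maps $M^b_+(\Omega)$ into itself; the cone being closed, the limit $(I-K)^{-1}$ is positive, hence order preserving. Setting $v := h_\infty - (I-K)u \ge 0$, we obtain
\begin{equation}
u = (I-K)^{-1}(h_\infty - v) = (I-K)^{-1}h_\infty - (I-K)^{-1}v \le (I-K)^{-1}h_\infty,
\end{equation}
which is the asserted bound.

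For the consequence I would pass to norms. Since $u \ge 0$ and $h_\infty \ge 0$, monotonicity of the supremum norm gives $\sup u \le \|(I-K)^{-1}h_\infty\| \le \|(I-K)^{-1}\|\,\sup h_\infty$, so $\sup u \downarrow 0$ as $\sup h_\infty \downarrow 0$, whence $s(\infty,\cdot) = e^{-u} \to 1$ uniformly. To justify the monotone arrows, I would observe that the iteration \eqref{sn} depends on $h$ in an order-reversing way: a pointwise decrease of $h$ raises $s_0 = e^{-h}$ and, since the right-hand side of \eqref{sn} is non-decreasing in $s_n$ and non-increasing in $h$, it raises every $s_n$ and therefore the limit $s(\infty,\cdot)$. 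Combined with the bound just derived, this gives $s(\infty,\cdot) \uparrow 1$ as $\sup h_\infty \downarrow 0$.

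The main obstacle is the passage from the order inequality $(I-K)u \le h_\infty$ to $u \le (I-K)^{-1}h_\infty$: this is not a formal cancellation but rests squarely on the \emph{positivity} of $(I-K)^{-1}$, which is exactly what $R_0 < 1$ buys us through the positive, norm-convergent Neumann series. The only other point requiring care is the a priori boundedness of $u$ (equivalently $s(\infty,\cdot) > 0$), which would fail without the standing assumption that $k$, and hence $K$, is bounded; note also that only the upper bound of Lemma \ref{lem4.2} is needed here, the lower bound being reserved for the complementary regime $R_0 \ge 1$.
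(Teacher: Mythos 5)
Your proposal is correct and follows essentially the same route as the paper: apply the upper bound $F(u)\le Ku$ from Lemma \ref{lem4.2} to get $u\le h_\infty+Ku$, then invoke the positive invertibility of $I-K$ (via the positive, norm-convergent Neumann series, which is exactly what $R_0=\rho(K)<1$ provides) to conclude $u\le (I-K)^{-1}h_\infty$ and hence uniform convergence of $s(\infty,\cdot)$ to $1$. The paper's proof is just a terser version of the same argument; your additional checks (a priori boundedness of $u$, positivity of the Neumann series limit, the norm estimate for the final claim) are correct elaborations of steps the paper leaves implicit.
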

\begin{proof}
Let $u$ be a positive solution of \eqref{u}.  It follows from \eqref{F} that
\begin{equation}
u \leq h_\infty + Ku. 
\end{equation}
Then it holds that $u \le (I-K)^{-1}h_\infty$, because $I-K$ is positively invertible if $R_0=r(K)<1$, which establishes the conclusion.
\end{proof}

For the above Lemma \ref{lem4.3}, the reader may refer to Theorem 3.2 in \cite{Thieme2024b} and Proposition 12 in \cite{Inaba2025}.

\begin{lemma}\label{lem4.4}
Let $u$ be a positive solution of \eqref{u}. If \( R_0 > 1 \) and \( \theta(h_\infty) > 0\), then it holds that
\renewcommand{\labelenumi}{(\theenumi)}
\begin{enumerate}
\item \( \theta(u) >  \theta(h_\infty) > 0\),  
\item \(\sup(u - h_\infty) \geq \ln R_0\),
\item \(\inf s(\infty, \cdot) = \inf e^{-u(\cdot)} \leq \frac{1}{R_0}\).
\end{enumerate}
\end{lemma}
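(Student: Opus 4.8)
The plan is to feed the eigenfunctional $\theta$ of Theorem~\ref{th4.1} into the final size equation \eqref{u}, and to combine it with the two-sided bound of Lemma~\ref{lem4.2} together with the positivity of both $K$ and $\theta$. Since $F(u)=K(1-e^{-u})\ge 0$ and $h_\infty\ge 0$, any positive solution automatically satisfies $u\ge h_\infty\ge 0$, a fact I would use repeatedly. I would establish the three claims in the order stated, as (2) relies on (1) and (3) is a corollary of (2).

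For (1), I would apply $\theta$ to \eqref{u} and use the eigenfunctional relation $\theta(K\phi)=R_0\theta(\phi)$ to obtain
\begin{equation}
\theta(u)=\theta(h_\infty)+R_0\,\theta(1-e^{-u}).
\end{equation}
Everything then hinges on showing $\theta(1-e^{-u})>0$. Here one must be careful: a positive functional may vanish on a nonzero nonnegative function, so positivity of $\theta$ alone does not suffice. Instead I would exploit the pointwise chain $1-e^{-u}\ge 1-e^{-h_\infty}\ge e^{-\sup h_\infty}\,h_\infty$ (the first inequality because $u\ge h_\infty$, the second because $1-e^{-y}\ge ye^{-y}$, exactly as in the proof of Lemma~\ref{lem4.2}), and then apply the monotone functional $\theta$ to conclude $\theta(1-e^{-u})\ge e^{-\sup h_\infty}\theta(h_\infty)>0$. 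As $R_0>0$, this gives $\theta(u)>\theta(h_\infty)>0$.

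For (2) the naive route, applying Lemma~\ref{lem4.2} directly to $\phi=u$, only delivers $\sup u\ge\ln R_0$, which is strictly weaker than the claim since $h_\infty\ge 0$. The key trick, and what I expect to be the main obstacle, is to pass to $w:=u-h_\infty=F(u)$ \emph{before} invoking Lemma~\ref{lem4.2}. Because $h_\infty\ge 0$ gives $u\ge w$ and hence $1-e^{-u}\ge 1-e^{-w}$, monotonicity of $K$ converts the identity $w=K(1-e^{-u})$ into the self-referential inequality $w\ge K(1-e^{-w})=F(w)$. Now Lemma~\ref{lem4.2} applied to $w$ yields $w\ge e^{-\sup w}Kw$; applying $\theta$ and using $\theta(Kw)=R_0\theta(w)$ produces $\theta(w)\ge e^{-\sup w}R_0\,\theta(w)$. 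Since $\theta(w)=\theta(u)-\theta(h_\infty)>0$ by part (1), I can divide to get $1\ge e^{-\sup w}R_0$, that is $\sup(u-h_\infty)=\sup w\ge\ln R_0$.

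Finally, (3) is immediate from (2): since $h_\infty\ge 0$ we have $\sup u\ge\sup(u-h_\infty)\ge\ln R_0$, so $\inf s(\infty,\cdot)=\inf e^{-u(\cdot)}=e^{-\sup u}\le e^{-\ln R_0}=1/R_0$. The only genuine subtlety throughout is the reduction to $w$ in (2) that sharpens $\sup u$ into $\sup(u-h_\infty)$; the remaining manipulations are bookkeeping with the positivity and monotonicity already assembled above.
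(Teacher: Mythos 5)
Your proposal is correct and follows essentially the same route as the paper: part (2) (passing to $w=u-h_\infty$, deriving $w\ge e^{-\sup w}Kw$, and applying $\theta$) is exactly the paper's argument, and part (3) is the same one-line consequence. The only cosmetic difference is in part (1), where you lower-bound $\theta(1-e^{-u})$ via $1-e^{-h_\infty}\ge e^{-\sup h_\infty}h_\infty$ while the paper bounds $\theta(F(u))\ge e^{-\sup u}R_0\,\theta(u)$ using Lemma~\ref{lem4.2} directly; both are the same use of $1-e^{-y}\ge ye^{-y}$.
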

\begin{proof}
(1) Since \( F \) is order preserving, \eqref{u} implies $u \geq h_\infty$, so we have
\begin{equation}
  \theta(u) \geq  \theta(h_\infty) > 0.
\end{equation}
Next observe that
\begin{equation}
 \theta(u-h_\infty) =  \theta( F(u)) \geq e^{-\sup u}  \theta( K u) = e^{-\sup u} R_0  \theta( u) > 0.
\end{equation}
Hence
\begin{equation}
\theta( u )>  \theta( h_\infty)> 0.
\end{equation}
(2) Write \eqref{u} as
\begin{equation}
u - h_\infty = F(u),
\end{equation}
and note that
\begin{equation}
F(u) \geq F(u-h_\infty),
\end{equation}
since \( F \) is order-preserving.  It follows from Lemma \ref{lem4.2} that
\begin{equation}
F(u-h_\infty) \geq e^{-\sup(u-h_\infty)} K(u-h_\infty).
\end{equation}
Thus we find
\begin{equation}
u-h_\infty \geq e^{-\sup(u-h_\infty)} K(u-h_\infty).
\end{equation}
Acting on this inequality with \(\theta\) we obtain
\begin{equation}
 \theta( u-h_\infty) \geq e^{-\sup(u-h_\infty)} R_0  \theta( u-h_\infty),
\end{equation}
which can only be true if the factor \( e^{-\sup(u-h_\infty)} R_0 \) is less than or equal to one, since $ \theta( u-h_\infty)>0$. It follows that
\begin{equation}
\sup(u-h_\infty) \geq \ln R_0.
\end{equation}
(3) Since \( e^{-u} \leq e^{-(u-h_\infty)} \) we have
\begin{equation}
\inf e^{-u} \leq e^{-\sup(u-h_\infty)} \leq \frac{1}{R_0}.
\end{equation}
\end{proof}

For the above Lemma \ref{lem4.4}, the reader may refer to Theorem 3.2 in \cite{Thieme2024b}.
The interpretation of (3) of Lemma \ref{lem4.4} is that even a very small introduction has, provided $ \theta( h_\infty)  > 0$, a large effect for at least some traits \( x \).
From Theorem 3.1 in \cite{Thieme2024b}, the condition $ \theta(h_\infty)>0$ is satisfied if $h_\infty \in M^b_+(\Omega)$ and $h_\infty \ge \delta v$ for some $\delta>0$ and $v \in M^b_+(\Omega)$ given by
\begin{equation}
v(x):=  \int_{\Omega} k(x,\xi) \Phi(d\xi).
\end{equation}

Now we have a threshold theorem:

\begin{theorem}
 \( R_0 = 1 \) is a threshold in the sense that for \( R_0 < 1 \) a small introduction can only lead to a small outbreak, while for \( R_0 > 1 \) a small introduction can lead to a large outbreak.
\end{theorem}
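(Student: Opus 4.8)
The plan is to recognize that this theorem is not a fresh computation but a qualitative reading of the two preceding lemmas, so the real work lies in pinning down what the words ``small introduction'', ``small outbreak'' and ``large outbreak'' should mean in terms of quantities already at our disposal. I would measure the size of the introduction by $h_\infty$, recalling from \eqref{h} that $h_\infty(x)=\int_0^\infty F_0(\sigma,x)\,d\sigma$ is exactly the cumulative force-of-infection contributed by individuals infected before $t=0$. The size of the outbreak I would measure by how far the final susceptible profile $s(\infty,\cdot)=e^{-u(\cdot)}$ is driven away from the disease-free value $1$, equivalently by the magnitude of the solution $u=-\ln s(\infty,\cdot)$ of \eqref{u}.

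For the subcritical regime $R_0<1$ I would invoke Lemma \ref{lem4.3} directly. Since $R_0=\rho(K)<1$, the operator $I-K$ is positively invertible, and the bound $u\le(I-K)^{-1}h_\infty$ controls the final cumulative force-of-infection, in the positive cone, by a fixed bounded operator applied to the introduction. Because the supremum norm is monotone and $u\ge 0$, this yields $\sup u\le\|(I-K)^{-1}\|\,\sup h_\infty$, so that $s(\infty,\cdot)\uparrow 1$ uniformly in $x$ as $\sup h_\infty\downarrow 0$. This is precisely the assertion that a small introduction can only cause a small outbreak: the outbreak size is bounded by a constant multiple of the introduction size.

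For the supercritical regime $R_0>1$ I would invoke Lemma \ref{lem4.4}(3), which gives $\inf s(\infty,\cdot)\le 1/R_0$ whenever $\theta(h_\infty)>0$. The decisive point is that the bound $1/R_0<1$ does not depend on the magnitude of $h_\infty$; it remains in force no matter how small the introduction is, provided only that the introduction is non-degenerate with respect to the eigenfunctional $\theta$ furnished by Theorem \ref{th4.1}. Hence an arbitrarily small introduction can still deplete the susceptible fraction to at most $1/R_0$ for at least some traits, which is a large outbreak in the intended qualitative sense.

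The step I expect to be the main obstacle is not a calculation but the careful handling of the quantifiers and of the non-degeneracy condition $\theta(h_\infty)>0$ in the supercritical case. One must argue that this condition holds for any genuine introduction of the pathogen; as recorded after Lemma \ref{lem4.4}, it is guaranteed whenever $h_\infty\ge\delta v$ for some $\delta>0$, with $v(x)=\int_\Omega k(x,\xi)\,\Phi(d\xi)$. Making explicit that ``can lead to'' is an existential claim (some introductions, exciting the relevant trait directions) whereas ``can only lead to'' is a universal claim over all introductions is exactly where the otherwise informal statement needs to be pinned down.
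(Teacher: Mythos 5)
Your proposal is correct and matches the paper's intended argument exactly: the theorem is stated as a qualitative summary of Lemma \ref{lem4.3} (subcritical case, $u\le(I-K)^{-1}h_\infty$ forcing $s(\infty,\cdot)\to 1$ as the introduction vanishes) and Lemma \ref{lem4.4}(3) (supercritical case, $\inf s(\infty,\cdot)\le 1/R_0$ independently of the size of the introduction, provided $\theta(h_\infty)>0$). Your attention to the non-degeneracy condition $\theta(h_\infty)>0$ and the existential versus universal reading of ``can lead to'' is precisely the point the paper makes in the remarks surrounding the theorem.
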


Note that our setting includes the case of isolated subpopulations, i.e., \( A(\tau, x, \xi) = 0 \) if \( x \neq \xi \). For this case we can define the basic reproduction number at $x \in \Omega$, denoted by  \( R_0(x) \), and it is possible that \( \{x : R_0(x) \geq 1\} \) consists of just a few points. If so, \(\theta\) corresponds to the evaluation in one or more of these points.
Then it is possible that, when we start with \( R_0 > 1 \), after the outbreak we still have a (possibly different) \( R_0 > 1 \). Indeed, consider two isolated subpopulations, both with \( R_0 > 1 \). The spectral radius corresponds to the largest of the two. When we introduce the infection only in one subpopulation, nothing will change in the other subpopulation. The point is that the NGO may have several positive eigenfunctionals if we deal with a reducible situation.
On the other hand, for the irreducible situation, the effective reproduction number of the susceptible population remaining after the epidemic outbreak (the final reproduction number) is less than one (Theorem 5.6 in \cite{Bootsma2024}, Remark 3.4 in \cite{Thieme2024b}, Proposition 15 in \cite{Inaba2025}).

\section{Challenges / Open Problems}

 With the Covid pandemic fresh in our memory, it is clear that the static heterogeneity Kermack-McKendrick model lacks many desirable properties. Human behaviour adapts to perceived risk, either spontaneously or on the instruction of the government \cite{dOnofrio2021}. Immunity wanes \cite{Scarabel2025}. Vectors adapt their behaviour \cite{Diop2025}. Pathogen strains mutate. Seasonal changes have an impact on the transmissibility of the virus \cite{dOnofrio2024b}. Vaccination may take place during an outbreak \cite{dOnofrio2024a}. 

     Even though it is clear that outbreak dynamics is influenced by all of these mechanisms, it is virtually impossible to quantitatively disentangle their effect on the basis of data. Public health organizations often advise their government by way of scenario analyses performed with large scale simulation models. In our opinion, this should not discourage (applied) mathematicians. Hopefully we can gradually develop and analyse theoretical models that provide general qualitative insights in the influence of the kind of mechanisms listed above. Such insights serve as simplified symbolic ‘pictures’ that can greatly catalyse the meaningful interpretation of numbers and graphs produced by way of computational studies of complex models. Also in the world of models, diversity is desirable.

     As a rule, transmission is superimposed on a contact process. Therefore epidemic modeling often requires modeling of an underlying contact process. As a concrete example we mention Heesterbeek and Metz \cite{Heesterbeek1993} saturating contact process based on pair formation, even though, in a suitable limit, pairs dissolve quickly, i.e., stay together for only a negligible (at the time scale of epidemic dynamics) amount of time. The  original motivation of this contact submodel was to investigate how transmission is, possibly, influenced by population size/density. But it might also serve to incorporate a somewhat mechanistic description of a behavioural feedback loop, as follows. The process is characterized by one parameter, see Section 8 of \cite{Bootsma2024}. By allowing this parameter to vary dynamically in response (possibly with some delay) to the prevailing incidence or prevalence, one obtains a simple representation of contact reduction triggered by awareness. The prediction would be that the overshoot is reduced \cite{Nguyen2023}. Note that dynamic heterogeneity by itself, i.e., without feedback, can also lead to overshoot reduction \cite{Tkachenko2021b}. See also Appendix B1 of \cite{Bootsma2024}.

      We refer to \cite{Heesterbeek1993} and Section 9 of \cite{Bootsma2024} for a heterogeneous version of the Heesterbeek-Metz model and to \cite{Thieme2000} for a general survey. It would be an interesting challenge to incorporate behavioural feedback in such more complicated models. See Appendix B2 of \cite{Bootsma2024} for references to various papers about behavioural feedback. Also see \cite{Buonomo2025} for a recent survey, including both compartmental models and models in Kermack-McKendrick spirit.
      
      In this paper we focused on outbreak models, but malaria and many other infectious diseases are in fact endemic. The demography of the host is an important constituent of models for endemic diseases \cite{Breda2012, Inaba2001, Inaba2016, Kermack1932, Kermack1933}. Far too often, especially in compartmental models, one assumes that host survival is described by an exponential function of age $a$. For modern human society this is way off, even in underdeveloped countries. In our opinion, it is not that difficult to formulate models involving both a demographic survival function and general survival functions that allow us to describe the expected infectiousness as a function of time-since-infection or the expected susceptibility as a function of time-since-recovery, but it is close to impossible to obtain epidemiologically relevant insights by analysing such models (The situation is very different for age-structured epidemic models without duration dependence. see \cite{Anderson1991, Busenberg1993, Dietz1975, Hethcote2000, Iannelli1995, Iannelli2017, Inaba1990, Inaba2017, Li2020, Thieme2003}.)
    For vector transmission (like in the case of malaria or dengue) too, the formulation of models with general infectiousness as a function of time-since-infection is feasible, but the analytic investigation  of such models is a major challenge. (When we say `analysis', we have purely theoretical analysis in mind. Concerning numerical, in particular numerical bifurcation, approaches, we are a bit more optimistic. see \cite{Diekmann2025}.) We refer to the pioneering work of Dietz and Schenzle \cite{Dietz1985}, Gripenberg \cite{Gripenberg1983}, Hoppensteadt \cite{Hoppensteadt1975} and the more recent attempts \cite{Inaba2008, Inaba2016, Breda2012, Burie2017}\cite[Secs. 6.5 and 8.4]{Inaba2017}, for some evidence, but stress that we hope, of course, that our readers will prove us wrong concerning the `impossibility' half of the statement.

A major challenge, in our opinion, is to investigate whether graph limits\footnote{\url{https://lovasz.web.elte.hu//bookxx/hombook-almost.final.pdf}} can yield manageable models that do generate epidemiological insights? Here the starting point would consist of network models in the spirit of \cite{Cure2025, Kiss2017, Leung2017, Zhao2025}.

As for sexually transmitted diseases \cite{Anderson1991, Brauer2019, Inaba2017}, pairwise interactions make perfect sense. But for airborne transmission it also makes sense to think in terms of meeting places and meeting events in which several individuals are involved. Despite the recognized importance of superspreading events \cite{Lloyd2005} and the established tradition of household models, \cite{Ball2002} and the references in there, it seems fair to say that such models are presently making a hesitant start, see \cite{Luckhaus2023a, Luckhaus2023b, Ball2022, Ball2024}. A side effect of larger meeting events may be an enhancement of the dose received by an up-to-then susceptible individual. How dose and effect are related is, in view of all the intricacies of innate immunity, a subtle problem. And when do we speak of one dose and when do we consider the time interval between two doses so large that we consider their effect as independent of each other?  See \cite{Luckhaus2023a} and \cite{Ponce2024} for  models requiring a sufficiently large instantaneous or cumulative dose, respectively, in order to produce a true infection, leading to an Allee effect at the population level.

The mathematical approach in this paper relies on the bilinear structure of the incidence (2.1). A generalization to a nonlinear dependence of $\frac{dS}{dt} $ on $S$ is possible \cite{Kiss2017} and Section 2.5 of \cite{Leung2017}; the function $\Psi$ in (2.7) becomes more complicated but keeps its properties. A direct generalization to a nonlinear dependence on $I$ (as it occurs, e.g., if there is an underlying contact process as described before) does not seem possible.
One can, however,  derive some invariants
which provide a possible explanation that some
covid-19 data  suggest some sort of Allee effect
for the size of the outbreak \cite{Luckhaus2023a, Luckhaus2023b}.

Since monotonicity arguments play a large role in the
derivation of our results, appropriate generalizations
can be obtained for incidences that can be minorized
and/or majorized by density-dependent incidences
(called upper and lower density-dependent incidences in \cite{FaTh}).
In the mathematical derivation, equalities would be replaced
by inequalities. This way, in a model
with infection-age, but without static heterogeneity
(though an extension should be possible), one still obtains
that, after an epidemic outbreak, there always
are some susceptibles left \cite[Thm.4.1]{FaTh}, but that their number
can be very small if some appropriate version of
a basic reproduction number is very large
\cite[Thm.5.2]{FaTh}.

\

     Our conclusion: challenges galore …

\

\bmhead{Acknowledgements}
 We would like to thank the two reviewers for their helpful comments, which improved the manuscript.

\section*{Declarations}

\begin{itemize}
\item Funding: Hisashi Inaba was supported by JSPS KAKENHI Grant Number 22K03433 and Japan Agency for Medical Research and Development (AMED; JP23fk0108685). 

\item Conflict of interest/Competing interests: The authors have no relevant financial or non-financial interests to disclose.
%\item Ethics approval and consent to participate
%\item Consent for publication
%\item Data availability 
%\item Materials availability
%\item Code availability 
%\item Author contribution
\end{itemize}

%%===================================================%%
%% For presentation purpose, we have included        %%
%% \bigskip command. Please ignore this.             %%
%%===================================================%%
\bigskip

%%===========================================================================================%%
%% If you are submitting to one of the Nature Portfolio journals, using the eJP submission   %%
%% system, please include the references within the manuscript file itself. You may do this  %%
%% by copying the reference list from your .bbl file, paste it into the main manuscript .tex %%
%% file, and delete the associated \verb+\bibliography+ commands.                            %%
%%===========================================================================================%%

%\bibliography{sn-bibliography}% common bib file
%% if required, the content of .bbl file can be included here once bbl is generated
%%\input sn-article.bbl
%\input{sn-article_yamaguti20250306.bbl}

\end{document}